\newtheorem{theorem}{Theorem}
\newcounter{forex}[section]
\newtheorem{lemma}{Lemma}
\newtheorem{corollary}{Corollary}
\newtheorem{remark}{Remark}
\begin{document}
\title{A new bound on the capacity of the binary deletion channel with high deletion probabilities}

%\author{Marco~Dalai,~\IEEEmembership{Member,~IEEE}
%\thanks{The author is with the Department of Information Engineering,
%        University of Brescia, via Branze 38 - 25123, Brescia, Italy.
%      Email: marco.dalai@ing.unibs.it}

\author{
\authorblockN{Marco Dalai,}
\authorblockA{Department of Information Engineering\\University of Brescia, Italy\\Email: marco.dalai@ing.unibs.it}
}

\maketitle

\begin{abstract}
Let $C(d)$ be the capacity of the binary deletion channel with deletion probability $d$. It was proved by
Drinea and Mitzenmacher that, for all $d$,
$C(d)/(1-d)\geq 0.1185
$. Fertonani and Duman recently showed that $\limsup_{d\to 1}C(d)/(1-d)\leq 0.49$. In this paper, it is proved that  
$\lim_{d\to 1}C(d)/(1-d)$ exists and is equal to $\inf_{d}C(d)/(1-d)$. This result suggests the conjecture that the curve $C(d)$ my be convex in the interval $d\in [0,1]$. Furthermore,
using currently known bounds for $C(d)$, it leads to the upper bound $\lim_{d\to 1}C(d)/(1-d)\leq 0.4143$. \end{abstract}

\section{Introduction}
A binary deletion channel $W^d$ is defined as a binary channel that drops bits of the input sequence independently with probability $d$. Those bits that are not dropped simply pass through the channel unaltered.
While simple to describe, the deletion channel proves to be very difficult to analyze. Dobrushin (\cite{dobrushin_1967}) showed that for such a channel it is possible to define a capacity $C(d)$ and that a Shannon like theorem applies to this channel. However, no closed formula expression is known up to now for the capacity $C(d)$, and only upper and lower bounds are currently available (see \cite{diggavi_2006,drinea_2007,kanoria_2010,kalai_2010,fertonani_2010}).

For small values of $d$, it was recently independently proved in \cite{kanoria_2010} and \cite{kalai_2010} that $C(d)\approx 1- H(d)$, where $H(d)$ is the binary entropy function. For values of $d$ close to $1$, it is known (see \cite{drinea_2006,fertonani_2010}) that $C(d)$ satisfies 
\begin{equation}
0.1185\leq \liminf_{d\to 1}\frac{C(d)}{1-d}\leq \limsup_{d\to 1}\frac{C(d)}{1-d}\leq 0.49
\label{eq:infsupbounds}
\end{equation}

As far as the author knows, there is no result in the literature on the existence of  $\lim_{d\to 1}C(d)/(1-d)$.
In this paper, it is proved that the limit exists and, in particular, that
\begin{equation}
\label{eq:quasoconv}
\lim_{d\to 1}\frac{C(d)}{(1-d)}=\inf_{d}\frac{C(d)}{(1-d)}.
\end{equation}

The best currently known upper bound for $C(d)$, when used in the right hand side of \eqref{eq:quasoconv}, leads to the upper bound
\begin{equation}
\lim_{d\to 1}\frac{C(d)}{(1-d)}\leq 0.4143,
\label{eq:newbound}
\end{equation}
which improves the best previously known bound of equation \eqref{eq:infsupbounds}.
Furthermore, equation \eqref{eq:quasoconv} suggests the conjecture that $C(d)$ may be a convex function of $d$. Indeed, as discussed in Section \ref{sec:neard1} below, experimental evidence (see Figure \ref{fig:BaseBounds}) suggests the convexity of $C(d)$ for values of $d$ sufficiently smaller than $1$, while it is not easy to exclude that the function may be concave near $d=1$. Equation \eqref{eq:quasoconv} is only a necessary condition\footnote{It is not difficult to construct examples of ``pathological'' functions $f(d)$ that satisfy equation \eqref{eq:quasoconv}, when used in place of $C(d)$, but are not convex in any neighborhood of $d=1$.} for the convexity of $C(d)$ near $d=1$. It is, however, sufficient to conclude that $C(d)$ is not strictly concave in any neighborhood of $d=1$. Thus, either $C(d)$ exhibit a pathological behavior near $d=1$, or it is convex in a sufficiently small neighborhood of $d=1$.
A proof of the convexity of $C(d)$ would of course imply equation \eqref{eq:quasoconv} and thus equation \eqref{eq:newbound}.

The main idea used in this paper is the intuitive fact that, for a large enough number of input bits $n$, the deletion channel $W^d$ is fairly well approximated by a channel which drops exactly $[dn]$ bits selected uniformly at random. In particular, we show that a channel $W_{n,k}$ with $n$-bits input and $k$-bits output, selected uniformly within the $k$-bits subsequences of the input, has a capacity that is close to $C(1-k/n)$ for large enough $n$. Using this result, we build upon the work in \cite{fertonani_2010} to prove \eqref{eq:quasoconv}. 

%The paper is structured as follows. Section \ref{sec:def} provides %some preliminary results on the curve $C(d)$ while Section 
%\ref{sec:extactch} contains the main result of the paper.

\section{Definition and regularity of $C(d)$}
\label{sec:def}
For any $i$ and $j$, let $X_i^j=(X_i,X_{i+1},\ldots,X_j)$ and, similarly $Y_i^j=(Y_i,Y_{i+1}. \ldots,Y_j)$. 
Let $W_n^d$ be a channel with an $n$-bit string input whose output is obtained by dropping the bits of the input independently with probability $d$. Let then
\begin{equation}
C_n(d)=\frac 1 n \max_{p_{X_1^n}} I(X_1^n;W_n^d(X_1^n)).
\end{equation} 
It was proved by Dobrushin \cite{dobrushin_1967} that a transmission capacity $C(d)$ can be consistently defined for the deletion channel $W^d$ and that it holds 
\begin{equation}
\label{eq:limbase}
C(d)=\lim_{n \to \infty} C_n(d).
\end{equation}

\begin{figure}
\includegraphics[width=\linewidth]{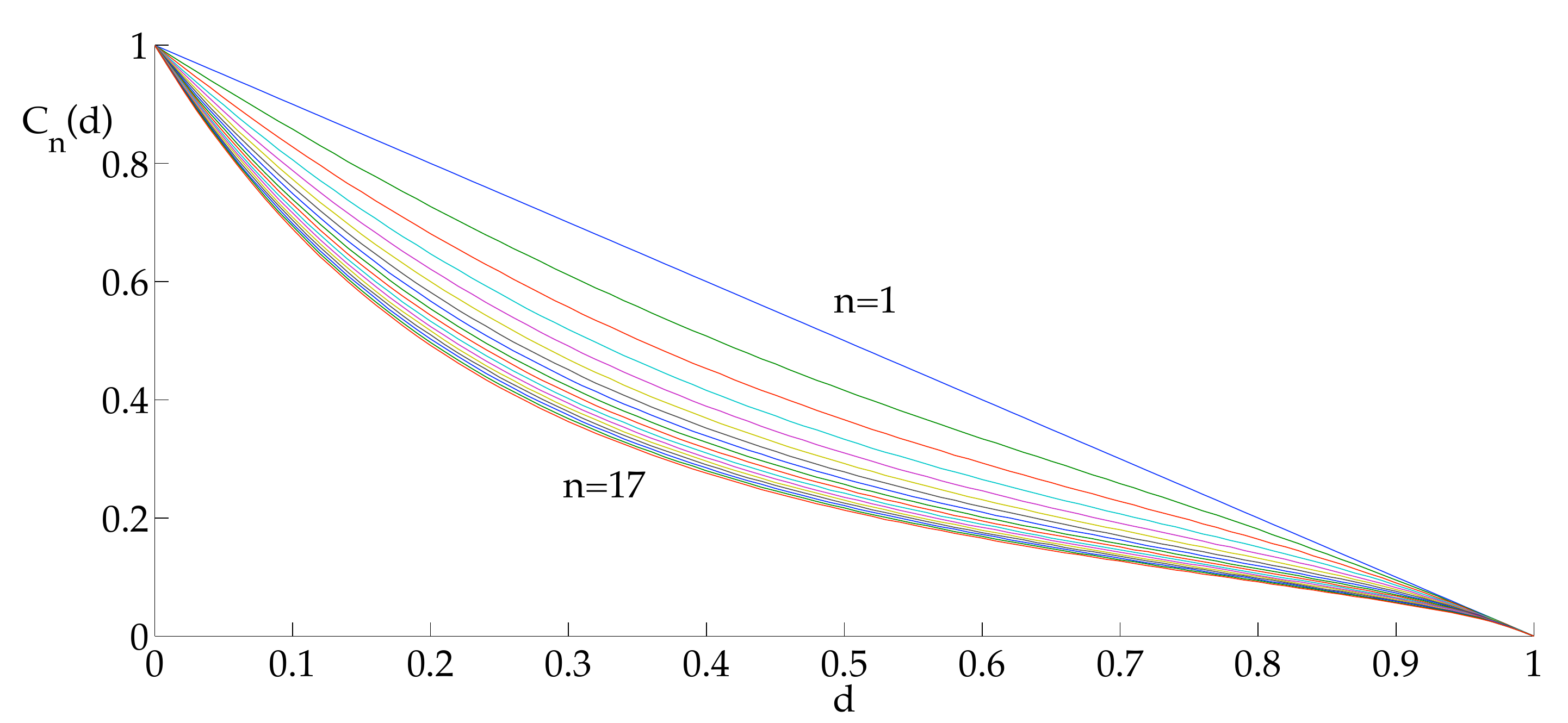}
\caption{Plot of the $C_n(d)$ functions for $n=1\ldots 17$ obtained by numerical evaluations in \cite{fertonani_2010}.}
\label{fig:BaseBounds}
\end{figure}

Figure \ref{fig:BaseBounds} shows the graph of the $C_n(d)$ functions for $n=1,\ldots,17$. The main objective of this section is to study the convergence of the $C_n(d)$ functions to deduce a regularity result for $C(d)$.

The following lemma gives a quantitative bound on the rate of convergence in \eqref{eq:limbase}.

\begin{lemma}
\label{l:boundCCn}(see also \cite{dobrushin_1967, kanoria_2010, fertonani_2010})
For every $d\in [0,1]$ and $n\geq 1$
\begin{equation}
C_n(d)-\frac{\log (n+1)}{n}\leq C(d)\leq C_n(d).
\end{equation}
\end{lemma}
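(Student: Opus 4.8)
The plan is to establish the two inequalities separately, as they have quite different characters. The upper bound $C(d)\leq C_n(d)$ should be the easier of the two: it expresses a subadditivity/monotonicity phenomenon. The key observation is that a codebook for the length-$n$ channel $W_n^d$ can be used on the true deletion channel $W^d$ by concatenating independent length-$n$ blocks. Because the deletions act independently on disjoint blocks, a capacity-achieving input distribution for a single block, applied independently across blocks, certainly achieves a rate of at least... no, one must be careful: the issue is that feeding the concatenation through $W^d$ produces a single merged output in which block boundaries are lost. So the right way to see $C(d)\leq C_n(d)$ is via the limit characterization $C(d)=\lim_m C_m(d)$ together with a superadditivity argument on $m\cdot C_m(d)$. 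Concretely, I would show that $(m+n)C_{m+n}(d)\leq mC_m(d)+nC_n(d)$ would give the wrong direction, so instead I expect the clean route is: $C(d)=\inf_n C_n(d)$ is false in general, so one argues directly that blocking an optimal input for $W^d$ into length-$n$ chunks cannot beat the per-block optimum, giving $C(d)\leq C_n(d)$ after dividing by $n$ and using the data-processing/additivity of mutual information over independent channel uses.

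\textbf{Upper bound in detail.} I would take a near-optimal input process for the full channel achieving rate close to $C(d)$ over a long block of length $N=mn$, and split it into $m$ consecutive segments of length $n$. Since the channel deletes bits independently, the output decomposes (conditioned on the deletion pattern) into the outputs of $m$ independent copies of $W_n^d$. A standard argument bounds $\frac{1}{N}I(X_1^N;Y)$ from above by the average of per-segment mutual informations $\frac1n I(X_{(j-1)n+1}^{jn};W_n^d(\cdot))$, each of which is at most $C_n(d)$. The only subtlety, which I expect to be the main obstacle, is that the receiver does not observe the segment boundaries in the merged output, so the per-segment outputs are not directly available; one must add back the boundary information and show it costs a vanishing rate, or invoke the known result that removing side information about deletion locations only increases the achievable rate in the relevant direction.

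\textbf{Lower bound.} For $C_n(d)-\frac{\log(n+1)}{n}\leq C(d)$, the natural approach is to convert the single-block code for $W_n^d$ into a code for the infinite channel by transmitting many independent blocks and providing the receiver with enough synchronization information to identify block boundaries. Given that the output of $W^d$ over $m$ blocks has total length concentrating around $mn(1-d)$, the receiver's uncertainty about how the received bits partition among the $m$ blocks can be communicated with a header whose rate is $O\!\left(\frac{\log(n+1)}{n}\right)$ per block: there are at most $n+1$ possible output lengths per block, so $\log(n+1)$ bits of side information per block suffice to restore the per-block decoding problem. Subtracting this overhead from the per-block rate $C_n(d)$ yields the stated bound in the limit $m\to\infty$, and then $n$ is held fixed.

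\textbf{Expected main obstacle.} The crux in both directions is handling the loss of block-boundary information in the concatenated output, since the deletion channel is synchronization-destroying rather than memoryless in the usual sense. I expect the clean way to package this is to first prove the bound for an auxiliary channel in which the receiver is told the number of surviving bits in each block, and then to argue that providing or withholding this $\log(n+1)$-bit-per-block side information changes the capacity by at most $\frac{\log(n+1)}{n}$ per symbol. This single side-information accounting simultaneously yields the additive $\frac{\log(n+1)}{n}$ gap in the lower bound and makes the upper-bound segmentation argument rigorous.
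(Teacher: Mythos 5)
Your proposal lands, in substance, on exactly the paper's proof: the upper bound comes from blocking plus data processing, and the lower bound from accounting for per-block output lengths at a cost of $\log(n+1)$ bits per block. However, you must correct a confusion in the middle of your argument that contradicts both the paper and your own subsequent reasoning. You claim that the subadditivity $(m+n)C_{m+n}(d)\leq mC_m(d)+nC_n(d)$ ``would give the wrong direction'' and that ``$C(d)=\inf_n C_n(d)$ is false in general.'' Both claims are wrong: $nC_n(d)$ is indeed subadditive (the concatenated output $Y$ is a deterministic function of the pair of per-block outputs, so $I(X_1^{n+m};Y)\leq I(X_1^{n+m};\tilde{Y}_{(0)},\tilde{Y}_{(1)})\leq nC_n(d)+mC_m(d)$ by data processing), and Fekete's lemma then gives precisely $C(d)=\lim_n C_n(d)=\inf_{n\geq 1} C_n(d)\leq C_n(d)$, which is the paper's one-line route to the upper bound. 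Your own detailed blocking argument for $N=mn$ is this same subadditivity argument in disguise, and the ``main obstacle'' you anticipate there is not an obstacle in this direction: since concatenation only destroys information, the data processing inequality gives the needed bound for free, with no vanishing-rate accounting of boundary information.

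For the lower bound, your final side-information packaging is the right formalization and is what the paper does, but it should be carried out at the level of mutual information rather than as an operational header-transmission scheme. The paper writes, for input length $hn$, $I(X_1^{hn};Y)\geq I(X_1^{hn};\tilde{Y}_{(0)}^{(h-1)})-H(\tilde{Y}_{(0)}^{(h-1)}|Y)$ and observes that, given $Y$, the block decomposition is determined by the lengths $L_0,\ldots,L_{h-1}$, of which only $h-1$ are free (they sum to $|Y|$), each taking at most $n+1$ values; hence the penalty is $(h-1)\log(n+1)$, which after dividing by $hn$ and letting $h\to\infty$ gives exactly the $\frac{\log(n+1)}{n}$ loss. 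An operational argument in which the transmitter ``provides the receiver with a header'' identifying the boundaries runs into a circularity you would have to resolve: the header itself must be communicated reliably over the same synchronization-destroying channel. The genie-style mutual-information bound avoids this entirely, so you should adopt it as the actual proof rather than as an afterthought.
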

\vspace{0.2cm}
\begin{proof}
As observed in \cite{kanoria_2010}, $nC_n(d)$ is a subadditive function of $n$. In fact, for an input $X_1^{n+m}$, let $\tilde{Y}_{(0)}=W_n^d(X_1^n)$  and $\tilde{Y}_{(1)}=W_m^d(X_{n+1}^{n+m})$. Note that $Y=W_{n+m}^d(X_1^{n+m})$ can be obtained as a concatenation of the strings $\tilde{Y}_{(0)}$ and $\tilde{Y}_{(1)}$. Thus, $X_1^{n+m} \to (\tilde{Y}_{(0)},\tilde{Y}_{(1)})\to Y$ is a Markov chain. Hence,
\begin{eqnarray*}
(n+m)C_{n+m}(d) & = & \max_{p_{X_1^{n+m}}}I(X_1^{n+m}; Y)\\
 & \leq & \max_{p_{X_1^{n+m}}} I(X_1^{n+m}; (\tilde{Y}_{(0)},\tilde{Y}_{(1)}))\\ & \leq  & nC_n(d)+mC_m(d).
\end{eqnarray*}
This implies by Fekete's lemma (see \cite[Prob. 98]{Polya-Szego}) that the limit $C(d)=\lim_{n\to \infty}C_n(d)$ exists and it satisfies $C(d)=\inf_{n\geq 1}C_n(d)$. This proves the right hand side inequality.

Take now an integer $h>1$ and consider, for an input $X_1^{hn}$, the output $Y=W_{hn}^d(X_1^{hn})$ as the concatenation of the $h$ outputs $\tilde{Y}_{(i)}=W_n^d(X_{ni+1}^{ni+n})$, $i=0,\ldots,h-1$. Let for convenience $\tilde{Y}_{(0)}^{(h-1)}=(\tilde{Y}_{(0)},\tilde{Y}_{(1)},\ldots,\tilde{Y}_{(h-1)})$.
 It is clear that $X_1^{hn} \to \tilde{Y}_{(0)}^{(h-1)}\to Y$ is a Markov Chain. Let $L_i$ be the length of $\tilde{Y}_{(i)}$. We thus have
\begin{eqnarray*}
hn C_{hn}(d) & =& \max_{p_{X_1^{hn}}}I(X_1^{hn}; Y)\\
& = & \max_{p_{X_1^{hn}}} [I(X_1^{hn}; \tilde{Y}_{(0)}^{(h-1)})
- I(X_1^{hn}; \tilde{Y}_{(0)}^{(h-1)}|Y)]\\
& \geq &  \max_{p_{X_1^{hn}}} [I(X_1^{hn}; \tilde{Y}_{(0)}^{(h-1)}) - H(\tilde{Y}_{(0)}^{(h-1)}|Y)]\\
& = &  \max_{p_{X_1^{hn}}} [I(X_1^{hn}; \tilde{Y}_{(0)}^{(h-1)})  - H(L_0^{h-1}|Y)]\\
& \geq &  \max_{p_{X_1^{hn}}} I(X_1^{hn}; \tilde{Y}_{(0)}^{(h-1)}) - (h-1)\log( n+1)\\
& = &  hnC_n(d)-(h-1)\log (n+1).
\end{eqnarray*}
Hence
\begin{eqnarray*}
C(d) & = & \lim_{h\to \infty}C_{hn}(d) \\
& \geq & \lim_{h\to \infty}\left[ C_n(d)-\frac{h-1}{h}\frac{\log(n+1)}{n}\right]\\
& = & C_n(d)-\frac{\log (n+1)}{n}.
\end{eqnarray*}
\end{proof}
See \cite[eq. (39)]{fertonani_2010} for tighter, though more complicated, bound.

As a consequence of Lemma \ref{l:boundCCn} we have the following regularity result for $C(d)$.
\begin{lemma}
\label{l:unifcont}
The function $C(d)$ is uniformly continuous in $[0,1]$. Thus, for every $\beta>0$ there is a $\alpha=\alpha(\beta)$ such that $|d_1-d_2|<\alpha\Rightarrow |C(d_1)-C(d_2)|<\beta$.
\end{lemma}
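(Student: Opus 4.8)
The plan is to deduce the uniform continuity of $C(d)$ from the \emph{uniform} convergence $C_n(d)\to C(d)$ supplied by Lemma \ref{l:boundCCn}, combined with the continuity of each individual $C_n$. The key observation is that the error bound in Lemma \ref{l:boundCCn} does not depend on $d$.

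First I would establish that, for each fixed $n$, the function $C_n(d)$ is continuous on $[0,1]$. For a fixed input distribution $p$ on $\{0,1\}^n$ and a fixed output string, the transition probability produced by $W_n^d$ is, by the independence of the deletions, a sum of terms of the form $d^{n-k}(1-d)^k$ over the possible embeddings, hence a polynomial in $d$. Consequently the joint distribution of $(X_1^n, W_n^d(X_1^n))$ depends continuously (indeed polynomially) on $d$, and since the mutual information is a continuous functional of a joint distribution on a finite alphabet (using the convention $0\log 0 = 0$), the map $(p,d)\mapsto I(X_1^n;W_n^d(X_1^n))$ is jointly continuous. The input distributions range over the compact probability simplex $\Delta$, so this map is in fact uniformly continuous on the compact set $\Delta\times[0,1]$. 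Writing $C_n(d)=\frac1n\max_{p\in\Delta} I$ and using the elementary inequality $|\max_p g(p,d_1)-\max_p g(p,d_2)|\le \max_p|g(p,d_1)-g(p,d_2)|$, one sees that $C_n$ inherits this uniform modulus of continuity; in particular each $C_n$ is (uniformly) continuous on $[0,1]$.

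Next, Lemma \ref{l:boundCCn} gives $0\le C_n(d)-C(d)\le \frac{\log(n+1)}{n}$ for every $d\in[0,1]$, and since the right-hand side tends to $0$ independently of $d$, this is precisely the statement that $C_n\to C$ uniformly on $[0,1]$. A uniform limit of continuous functions is continuous, so $C$ is continuous on the compact interval $[0,1]$ and therefore uniformly continuous; the $\varepsilon$--$\delta$ formulation stated in the lemma then follows by renaming the quantifiers.

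I do not anticipate any serious obstacle: once the uniformity of the bound in Lemma \ref{l:boundCCn} is recognized, the argument is the classical ``uniform limit of continuous functions on a compact set'' result. The only point demanding a little care is the continuity of $C_n$ near the boundary of $\Delta$ and near $d\in\{0,1\}$, where some output symbols acquire vanishing probability; this is handled by the convention $0\log 0 = 0$, under which the entropy and mutual-information functionals extend continuously to the boundary.
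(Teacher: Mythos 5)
Your proposal is correct, and its overall architecture is the same as the paper's: continuity of each $C_n$, plus the observation that the error bound $\log(n+1)/n$ in Lemma~\ref{l:boundCCn} is independent of $d$ (hence uniform convergence), plus Heine--Cantor on the compact interval $[0,1]$. The difference lies in how you establish continuity of $C_n$, and here your route is genuinely distinct and cleaner. The paper argues by contradiction: it supposes $C_n$ discontinuous at some $\bar d$, extracts a convergent subsequence of the optimal input distributions $p_k$ at points $d_k\to\bar d$, and derives a contradiction by playing the limiting distribution $p'$ and the optimizer $\bar p$ at $\bar d$ against each other. You instead exploit joint continuity of $(p,d)\mapsto I(X_1^n;W_n^d(X_1^n))$ on the compact product $\Delta\times[0,1]$, which gives \emph{uniform} continuity there, and then pass the modulus of continuity through the maximum via the elementary inequality $\bigl|\max_p g(p,d_1)-\max_p g(p,d_2)\bigr|\le \max_p\bigl|g(p,d_1)-g(p,d_2)\bigr|$. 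Both arguments rest on the same two ingredients (polynomial dependence of the transition probabilities on $d$, compactness of the simplex), but your version is direct, avoids extracting subsequences and any appeal to existence of maximizers beyond the max itself, and yields as a bonus an explicit uniform modulus of continuity for the family $\{C_n\}$ --- which the paper's contradiction argument does not provide. Your closing remark about the boundary of $\Delta$ and $d\in\{0,1\}$ is also handled correctly by the convention $0\log 0=0$, under which mutual information is continuous on the whole closed simplex.
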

\vspace{0.2cm}
\begin{proof}
As shown in Lemma \ref{l:boundCCn}, the functions $C_n(d)$ tend to $C(d)$ uniformly in $d$. Hence, if proved that the  $C_n(d)$ are continuous in $d$, so is their limit $C(d)$. Since the domain of $C(d)$ is compact, by the Heine-Cantor theorem $C(d)$ is also uniformly continuous. That the $C_n(d)$ functions are continuous is really intuitive; the shortest formal proof that we were able to provide goes as follows. The entries of the transition matrix of the channel $W_n^d$ are polynomials in $d$ and thus the mutual information $I(X_1^n;W_n^d(X_1^n))$ is a continuous function of $d$ and of the input distribution $p_{X_1^n}$. Hence, by moving $d$ continuously from $0$ to $1$ one expects the capacity to change continuously from $1$ to $0$. A formal proof, however, seems to require using the compactness of the sets of distributions $p_{X_1^n}$.  Assume that $C_n(d)$ is not continuous in $d=\bar{d}$ and let $\bar{p}$ be the input distribution that attains the value $C_n(\bar{d})$. Then there exists an $\varepsilon>0$ such that $|C_n(\bar{d})-C_n(d_k)|>\varepsilon$ for a sequence $d_k$ converging to $\bar{d}$. Consider the distributions ${p_k}$ that attain $C_n(d_k)$. Since the set of the $p_{X_1^n}$ is bounded and closed, there exists a subsequence of the $p_k$ that converges to a distribution $p'$. By continuity of the mutual information the $C_n(d_k)$ values tend to the mutual information $I'$ attained by $p'$ in $d=\bar{d}$. But, by definition of $C_n(\bar{d})$, we clearly have that $I'\leq C_n(\bar{d})$ and thus $C_n(d_k)\leq C_n(\bar{d}) -\varepsilon $ for $k$ large enough. But then the mutual information attained by $\bar{p}$ in $d_k$ tends to $C_n(\bar{d})\geq C_n(d_k)+\varepsilon$ for large enough $k$, which is absurd by definition of $C_n(d_k)$.\end{proof}

\section{Exact deletion channel}
\label{sec:extactch}
Let now $W_{n,k}$, $k\leq n$, be a channel with $n$-bits input whose output is uniformly chosen within the $ n \choose k$ $k$-bits subsequences of the input. This channel was efficiently used as an auxiliary channel in \cite{kalai_2010,fertonani_2010}.
Let then
\begin{equation}
C_{n,k}=\frac 1 n \max_{p_{X_1^n}} I(X_1^n;W_{n,k}(X_1^n)).
\end{equation} 

The following obvious result will be used later.

\begin{lemma}
\label{l:k1k2}
For every random $X_1^n$, if $k_1\geq k_2$ then
\begin{equation}
I(X_1^n;W_{n,k_1}(X_1^n))\geq I(X_1^n;W_{n,k_2}(X_1^n)).
\end{equation}
\end{lemma}
\vspace{0.2cm}
\begin{proof}
Simply note that the $W_{n,k_2}$ channel can be obtained as a cascade of $W_{n,k_1}$ and $W_{k_1,k_2}$. Thus, $X_1^n\to W_{n,k_1}(X_1^n)\to W_{n,k_2}(X_1^n)$ is a Markov chain and the lemma follows from the data processing inequality. 
\end{proof}

The following lemma bounds the capacity of the $W_n^d$ channel in terms of the capacity of certain exact deletion channels.

\begin{lemma}
\label{l:cndbound}
For every $\varepsilon>0$, $d\in[\varepsilon,1-\varepsilon]$, and $n\geq 1$
\begin{equation}
C_{n,\lceil (1-d-\varepsilon)n \rceil}- \, 2 e^{-2 \varepsilon^2 n} \leq  C_n(d) \leq C_{n,\lfloor(1-d+\varepsilon)n \rfloor}+ 2e^{-2 \varepsilon^2 n}.
\end{equation}
\end{lemma}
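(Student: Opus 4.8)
The plan is to decompose the deletion channel according to how many bits survive. Let $K$ denote the number of input bits that are \emph{not} dropped by $W_n^d$. First I would record two facts: (i) since each bit is deleted independently of its value, $K$ is distributed as $\mathrm{Bin}(n,1-d)$ and is \emph{independent} of the input $X_1^n$; and (ii) conditioned on $\{K=k\}$, the surviving positions form, by exchangeability, a uniformly random $k$-subset of $\{1,\dots,n\}$, so that $W_n^d$ then acts exactly as the exact deletion channel $W_{n,k}$. Combining these with the chain rule, and using that $K$ is a deterministic function of the output while being independent of $X_1^n$, gives the clean identity
\[
I(X_1^n;W_n^d(X_1^n))=\sum_{k=0}^n \Pr(K=k)\, I(X_1^n;W_{n,k}(X_1^n)),
\]
with \emph{no} $H(K)$ penalty term. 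Establishing this identity cleanly is the conceptual heart of the argument; once it is in place, the two bounds follow from monotonicity (Lemma \ref{l:k1k2}) together with a concentration estimate for $K$.

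For the upper bound I would fix an arbitrary input distribution and split the sum at $k^+=\lfloor(1-d+\varepsilon)n\rfloor$. For $k\le k^+$, Lemma \ref{l:k1k2} and the definition of $C_{n,k^+}$ give $I(X_1^n;W_{n,k}(X_1^n))\le n\,C_{n,k^+}$, while for $k>k^+$ I would use only the crude estimate $I(X_1^n;W_{n,k}(X_1^n))\le k\le n$. Hence $I(X_1^n;W_n^d(X_1^n))\le n\,C_{n,k^+}+n\,\Pr(K>k^+)$. Since $E[K]=(1-d)n$ and $K>k^+$ forces $K-E[K]>\varepsilon n$, Hoeffding's inequality yields $\Pr(K>k^+)\le e^{-2\varepsilon^2 n}$; dividing by $n$ and maximizing over the input gives $C_n(d)\le C_{n,k^+}+e^{-2\varepsilon^2 n}$, which is even stronger than claimed.

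For the lower bound I would instead feed into $W_n^d$ the distribution $p^\ast$ attaining $C_{n,k^-}$, where $k^-=\lceil(1-d-\varepsilon)n\rceil$. Dropping every term with $k<k^-$ and applying Lemma \ref{l:k1k2} to the rest (each of which is then at least $I(X_1^n;W_{n,k^-}(X_1^n))=n\,C_{n,k^-}$ under $p^\ast$) gives $I(X_1^n;W_n^d(X_1^n))\ge n\,C_{n,k^-}\,\Pr(K\ge k^-)$. The event $K<k^-$ forces $E[K]-K>\varepsilon n$, so Hoeffding again bounds $\Pr(K<k^-)\le e^{-2\varepsilon^2 n}$; combining this with the trivial estimate $C_{n,k^-}\le 1$ turns the multiplicative loss into an additive one and yields $C_n(d)\ge C_{n,k^-}-e^{-2\varepsilon^2 n}$.

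Two routine points remain. The hypothesis $d\in[\varepsilon,1-\varepsilon]$ is exactly what guarantees $0\le k^-\le k^+\le n$, so that the auxiliary channels $W_{n,k^\pm}$ are well defined: $d\ge\varepsilon$ ensures $k^+\le n$ and $d\le 1-\varepsilon$ ensures $k^-\ge 0$. The main obstacle, as noted, is not any of these tail estimates but the careful justification of the decomposition identity---specifically the exchangeability argument showing that conditioning the i.i.d.\ deletion pattern on its number of survivors reproduces $W_{n,k}$, together with the observation that this survivor count carries no information about the input. With that identity granted, the two Hoeffding bounds and Lemma \ref{l:k1k2} finish the proof, the extra factor of $2$ in the statement leaving comfortable slack.
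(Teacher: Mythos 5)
Your proposal is correct and follows essentially the same route as the paper: condition on the output length (your $K$, the paper's $L$), which is independent of the input and makes the conditional channel exactly $W_{n,k}$, then apply the monotonicity of Lemma \ref{l:k1k2} together with a Hoeffding/Chernoff tail bound. The only difference is cosmetic: you split the sum one-sidedly at $k^+$ (resp.\ $k^-$) instead of over the two-sided set $T$ used in the paper, which even yields the slightly sharper constant $e^{-2\varepsilon^2 n}$ in place of $2e^{-2\varepsilon^2 n}$.
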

\begin{proof}
We first prove the right hand side inequality. For an input $X_1^n$, let $Y=W_n^d(X_1^n)$ and let $L=|Y|$ be the length of $Y$. First note that $X_1^n \to Y \to L$ is a Markov chain. So, by applying the chain rule to $I(X_1^n;Y,L)$, considered that $I(X_1^n;L)=0$ since $L$ is independent from $X_1^n$, it is easily seen that $I(X_1^n;Y)=I(X_1^n;Y|L)$. Define $T=\{j: \left|\frac{j}{n}-(1-d)\right|\leq \varepsilon\}$, that is $j\in T$ if and only if $\lceil(1-d-\varepsilon)n\rceil\leq j\leq \lfloor(1-d+\varepsilon)n\rfloor$. Let  now $X_1^n$ be distributed according to the optimal distribution for the $W_n^d$ channel. Then we have
\begin{eqnarray*}
n C_n(d) & = &  I(X_1^n;Y | L) \\
 & = & \sum_{j=0}^n p_L(j) I(X_1^n;Y|L=j)\\
 & = & \sum_{j\in T} p_L(j)  I(X_1^n;Y|L=j)
 \\ & & \hspace{2cm}  
 + \sum_{j\in \bar{T}} p_L(j)  I(X_1^n;Y|L=j)\\
 & \stackrel{(a)}{\leq} & \sum_{j\in T} p_L(j)  I(X_1^n;Y|L=\lfloor(1-d+\varepsilon)n\rfloor)
 \\ & & \hspace{2cm}  
 + \sum_{j\in \bar{T}} p_L(j)  n\\
 & \leq & nC_{n,\lfloor(1-d+\varepsilon)n\rfloor} \sum_{j\in T} p_L(j)    + n\sum_{j\in \bar{T}} p_L(j)\\
 & \stackrel{(b)}{\leq} &  nC_{n,\lfloor(1-d+\varepsilon)n\rfloor}  +  2n e^{-2 \varepsilon^2n},
\end{eqnarray*}
where $(a)$ follows from Lemma \ref{l:k1k2} and the definition of $T$ and $(b)$ follows from the Chernoff bound. Dividing by $n$ we get the desired inequality.

As for the left hand side inequality, let now $X_1^n$ be distributed according to the optimal distribution for the $W_{n,\lceil(1-d-\varepsilon)n\rceil}$ channel. Then we have
\begin{eqnarray*}
n C_n(d) & \geq &  I(X_1^n;Y | L) \\
 & = & \sum_{j=0}^n p_L(j) I(X_1^n;Y|L=j)\\
 & = & \sum_{j\in T} p_L(j)  I(X_1^n;Y|L=j)
 \\ & & \hspace{2cm}  
 + \sum_{j\in \bar{T}} p_L(j)  I(X_1^n;Y|L=j)\\
 & \stackrel{(a)}{\geq} & \sum_{j\in T} p_L(j)  I(X_1^n;Y|L=\lceil(1-d-\varepsilon)n\rceil)\\
 & = & nC_{n,\lceil(1-d-\varepsilon)n\rceil} \sum_{j\in T} p_L(j)\\
 & \stackrel{(b)}{\geq} & nC_{n,\lceil(1-d+\varepsilon)n\rceil}(1-2 e^{-2 \varepsilon^2 n}) \\
 & \stackrel{(c)}{\geq} & nC_{n,\lceil(1-d+\varepsilon)n\rceil}-2n e^{-2\varepsilon^2n},
\end{eqnarray*}
where $(a)$ follows again from Lemma \ref{l:k1k2}, $(b)$ follows from the Chernoff bound, and $(c)$ follows from the obvious fact that $C_{n,\lceil(1-d+\varepsilon)n\rceil}\leq 1$. Dividing by $n$ the desired result is obtained.
\end{proof}

The following lemma bounds the capacity of the exact deletion channel $W_{n,k}$ in terms of $C(d)$ for appropriate values of $d$.

\begin{lemma}
For every $\varepsilon>0$ and integers $n$ and $k$
\begin{multline}
%\begin{equation}
\label{eq:cnkbound}
C\left(1-k/n+\varepsilon\right) - 2e^{-2 \varepsilon^2 n} \leq C_{n,k} \leq C\left(1-k/n-\varepsilon\right) 
 \\
+ 2e^{-2 \varepsilon^2 n} + \frac{\log( n+1) }{n}.
\end{multline}
%\end{equation}
\end{lemma}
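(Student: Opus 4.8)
The plan is to read off both inequalities from Lemma~\ref{l:cndbound} and Lemma~\ref{l:boundCCn} by choosing the deletion probability $d$ so that the ceiling, respectively the floor, appearing in Lemma~\ref{l:cndbound} collapses exactly onto the given integer $k$. The key observation is that $k$ is an integer, so a substitution of the form $1-d\mp\varepsilon = k/n$ turns $\lceil(1-d-\varepsilon)n\rceil$ or $\lfloor(1-d+\varepsilon)n\rfloor$ into $k$ on the nose. The whole proof is then a two-line composition of the two earlier lemmas.

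For the upper bound I would set $d = 1-k/n-\varepsilon$, so that $(1-d-\varepsilon)n = k$ and hence $\lceil(1-d-\varepsilon)n\rceil = k$. The left inequality of Lemma~\ref{l:cndbound} then reads $C_{n,k} - 2e^{-2\varepsilon^2 n} \le C_n(1-k/n-\varepsilon)$, and combining this with $C_n(d)\le C(d)+\log(n+1)/n$ from Lemma~\ref{l:boundCCn} gives $C_{n,k}\le C(1-k/n-\varepsilon)+2e^{-2\varepsilon^2 n}+\log(n+1)/n$, which is precisely the right-hand side of the claim. For the lower bound I would instead set $d = 1-k/n+\varepsilon$, so that $(1-d+\varepsilon)n = k$ and $\lfloor(1-d+\varepsilon)n\rfloor = k$; the right inequality of Lemma~\ref{l:cndbound} becomes $C_n(1-k/n+\varepsilon)\le C_{n,k}+2e^{-2\varepsilon^2 n}$, and using $C(d)\le C_n(d)$ from Lemma~\ref{l:boundCCn} on the left yields $C(1-k/n+\varepsilon)-2e^{-2\varepsilon^2 n}\le C_{n,k}$. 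Note that each direction invokes only one half of Lemma~\ref{l:cndbound}, so the substituted $d$ only needs to respect the corresponding one-sided constraint.

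The only delicate point, and the one I would treat carefully, is the domain bookkeeping. For $C_n(\cdot)$ and $C(\cdot)$ to be meaningful the substituted value of $d$ must lie in $[0,1]$: the upper-bound choice requires $k\le(1-\varepsilon)n$ and the lower-bound choice requires $k\ge\varepsilon n$. Inside these ranges the chosen $d$ also stays on the admissible side of Lemma~\ref{l:cndbound}, so the argument goes through verbatim. Outside them the argument of $C$ would exit $[0,1]$; there I would fall back on the conventions $C(d)=1$ for $d\le 0$ and $C(d)=0$ for $d\ge 1$, under which the trivial bounds $0\le C_{n,k}\le 1$, together with the nonnegative error terms, make both inequalities hold automatically. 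This edge handling, not the main composition, is where the attention is needed.
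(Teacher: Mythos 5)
Your proposal is correct and follows essentially the same route as the paper: substitute $d=1-k/n-\varepsilon$ and $d=1-k/n+\varepsilon$ into Lemma~\ref{l:cndbound} so the ceiling/floor collapse to $k$, then apply the two sides of Lemma~\ref{l:boundCCn}. Your extra attention to domain bookkeeping (keeping the argument of $C$ in $[0,1]$ and handling edge cases by convention) is more careful than the paper, which silently ignores these issues, but it does not change the substance of the argument.
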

\vspace{0.2cm}
\begin{proof}
Take $d=1-k/n-\varepsilon$ in Lemma \ref{l:cndbound} to obtain $C_{n,k}\leq C_n(1-k/n-\varepsilon)+2e^{-2 \varepsilon^2 n}\leq C(1-k/n-\varepsilon)+2e^{-2 \varepsilon^2 n} +\log(n+1) /n$, by virtue of Lemma \ref{l:boundCCn}. Then take $d=1-k/n+\varepsilon$ in Lemma \ref{l:cndbound} to obtain $C_{n,k}\geq C_n(1-k/n+\varepsilon)-2e^{-2 \varepsilon^2 n}\geq C(1-k/n+\varepsilon)-2e^{-2 \varepsilon^2 n}$.
\end{proof}

\begin{lemma}
\label{l:absboundcnk}
For every $\beta>0$, there is an $\bar{n}=\bar{n}(\beta)$ such that
\begin{equation}
|C_{n,k}-C(1-k/n)|<\beta \qquad \forall n\geq \bar{n}, \,k=1,\ldots,n.
\end{equation}
\end{lemma}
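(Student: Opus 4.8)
The plan is to combine the two-sided estimate \eqref{eq:cnkbound} from the previous lemma with the uniform continuity of $C$ established in Lemma \ref{l:unifcont}. The estimate \eqref{eq:cnkbound} sandwiches $C_{n,k}$ between $C(1-k/n\pm\varepsilon)$ up to two error terms, $2e^{-2\varepsilon^2 n}$ and $\log(n+1)/n$, that do \emph{not} depend on $k$. The decisive observation is that the single free parameter $\varepsilon$ can be fixed in advance as a function of $\beta$ alone: uniform continuity turns the shift $\pm\varepsilon$ in the argument of $C$ into a bound on $|C(1-k/n\pm\varepsilon)-C(1-k/n)|$ that holds simultaneously for all $k$, after which one only has to drive the two $k$-independent error terms below $\beta$ by taking $n$ large.

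Concretely, given $\beta>0$ I would first invoke Lemma \ref{l:unifcont} to pick $\alpha=\alpha(\beta/2)$ such that $|d_1-d_2|\le\alpha$ implies $|C(d_1)-C(d_2)|<\beta/2$, and then fix any $\varepsilon$ with $0<\varepsilon\le\alpha$. Since $(1-k/n\pm\varepsilon)$ differs from $1-k/n$ by exactly $\varepsilon\le\alpha$, uniform continuity gives $|C(1-k/n\pm\varepsilon)-C(1-k/n)|<\beta/2$ simultaneously for every $k$. Substituting these into \eqref{eq:cnkbound} yields
\begin{equation*}
C(1-k/n)-\tfrac\beta2-2e^{-2\varepsilon^2 n}\le C_{n,k}\le C(1-k/n)+\tfrac\beta2+2e^{-2\varepsilon^2 n}+\frac{\log(n+1)}{n}.
\end{equation*}
With $\varepsilon$ now fixed, both $2e^{-2\varepsilon^2 n}$ and $\log(n+1)/n$ tend to $0$ as $n\to\infty$ and are independent of $k$, so there is an $\bar n=\bar n(\varepsilon)=\bar n(\beta)$ with $2e^{-2\varepsilon^2 n}+\log(n+1)/n<\beta/2$ for all $n\ge\bar n$. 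For such $n$ the two displayed inequalities collapse to $|C_{n,k}-C(1-k/n)|<\beta$ for every $k=1,\dots,n$, which is the claim.

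The only genuinely delicate point, and the step I would treat most carefully, is the behaviour at the extreme values of $k$, where the shifted arguments leave the domain $[0,1]$: for $k$ close to $n$ one has $1-k/n-\varepsilon<0$, and for very small $k$ one has $1-k/n+\varepsilon>1$. The cleanest remedy is to read \eqref{eq:cnkbound} under the convention $C(d)=1$ for $d\le0$ and $C(d)=0$ for $d\ge1$; since $C(0)=1$, $C(1)=0$, and $C$ is continuous on $[0,1]$, this extension is uniformly continuous on all of $\R$, so the argument above applies verbatim. Alternatively, one can dispose of the two boundary strips by hand: for the small-$k$ strip the crude bound $C_{n,k}\le k/n$ (from $I(X_1^n;W_{n,k}(X_1^n))\le H(W_{n,k}(X_1^n))\le k$) together with the continuity estimate $C(1-k/n)<\beta/2$ shows both quantities lie below $\beta/2$ once $\varepsilon$ is small, whereas for the large-$k$ strip the trivial bound $C_{n,k}\le1$, paired with the lower half of \eqref{eq:cnkbound} and $C(1-k/n)>1-\beta/2$, pins $C_{n,k}$ near $1$ from both sides. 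Either route restores uniformity in $k$ and completes the proof.
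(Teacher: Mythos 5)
Your proof is correct and follows essentially the same route as the paper: sandwich $C_{n,k}$ via \eqref{eq:cnkbound}, fix $\varepsilon$ as a function of $\beta$ alone using the uniform continuity of $C$ (Lemma \ref{l:unifcont}), and then choose $\bar n=\bar n(\beta)$ so that the $k$-independent error terms $2e^{-2\varepsilon^2 n}$ and $\log(n+1)/n$ fall below the remaining budget. Your extra treatment of the boundary strips where $1-k/n\pm\varepsilon$ leaves $[0,1]$ addresses a point the paper's proof passes over silently (its Lemma \ref{l:cndbound} is only stated for $d\in[\varepsilon,1-\varepsilon]$), and both of your proposed fixes are sound.
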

\vspace{0.2cm}
\begin{proof}
First note that, for $\varepsilon>0$, $C(1-k/n+\varepsilon)\leq C(1-k/n)\leq C(1-k/n-\varepsilon)$. Hence, $C(1-k/n)$ satisfies the two inequalities satisfied by $C_{n,k}$ in equation \eqref{eq:cnkbound}. So, $|C_{n,k}-C(1-k/n)|$ is bounded by the difference between the right hand side and the left hand side of equation \eqref{eq:cnkbound}, that is
\begin{multline}
%\begin{equation}
|C_{n,k}-C(1-k/n)|  \leq 
C\left(1-k/n-\varepsilon\right)-C\left(1-k/n+\varepsilon\right) 
 \\
 + 4e^{-2 \varepsilon^2 n} + \frac{\log (n+1) }{n}.
\end{multline}
%\end{equation}
With the notation of Lemma \ref{l:unifcont}, take $\varepsilon<\alpha(\beta/2)/2$ so that $C\left(1-k/n-\varepsilon\right)-C\left(1-k/n+\varepsilon\right)<\beta/2$. Once $\varepsilon$ is fixed, choose $\bar{n}$ such that $4e^{-2\varepsilon^2\bar{n}} + \frac{\log( \bar{n}+1) }{\bar{n}}<\beta/2$ to complete the proof. Note that $\bar{n}$ is a function of $\beta$ only and that the result holds for every $k\leq n$.
\end{proof}

We can now state the first result of this paper.
\begin{theorem} 
\label{l:limitnk}
Let $k_n$ be an integer valued sequence such that $k_n/n$ tends to $1-d$ as $n$ goes to infinity. Then
\begin{equation}
\lim_{n\to\infty} C_{n,k_n}=C(d).
\end{equation}
\end{theorem}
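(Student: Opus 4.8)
The plan is to prove convergence by an $\varepsilon$-style argument that interpolates through the quantity $C(1-k_n/n)$, splitting the distance into two pieces via the triangle inequality:
\begin{equation*}
|C_{n,k_n}-C(d)| \leq |C_{n,k_n}-C(1-k_n/n)| + |C(1-k_n/n)-C(d)|.
\end{equation*}
The first term measures the gap between the exact-deletion capacity and the capacity of the actual deletion channel at the \emph{matched} deletion fraction $1-k_n/n$, while the second term measures the effect of the discrepancy between $k_n/n$ and its limit $1-d$. I would show that each of these can be made arbitrarily small once $n$ is large.

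For the first term, the essential tool is Lemma \ref{l:absboundcnk}, whose crucial feature is that its bound is \emph{uniform in $k$}: given $\beta>0$, there is an $\bar{n}(\beta)$ with $|C_{n,k}-C(1-k/n)|<\beta/2$ simultaneously for every admissible $k$ as soon as $n\geq\bar{n}$. Evaluating this along the diagonal $k=k_n$ immediately bounds the first term by $\beta/2$ for all $n\geq\bar{n}$, with no further work. It is precisely the uniformity over $k$ that allows a single threshold $\bar{n}$ to control the term even though the index $k_n$ is moving with $n$.

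For the second term, since $k_n/n\to 1-d$ we have $1-k_n/n\to d$, so the uniform continuity of $C$ established in Lemma \ref{l:unifcont} supplies an $N$ with $|C(1-k_n/n)-C(d)|<\beta/2$ for all $n\geq N$. Combining the two estimates, for $n\geq\max(\bar{n},N)$ the full distance is below $\beta$, which proves the claimed limit. I do not expect a genuine obstacle here, as the hard analytic content has already been packaged into Lemmas \ref{l:absboundcnk} and \ref{l:unifcont}; the only point requiring a word of care is checking that $k_n$ is eventually an admissible index, i.e. $1\leq k_n\leq n$, so that Lemma \ref{l:absboundcnk} actually applies. This holds for all large $n$ whenever $1-d\in(0,1)$, and the boundary cases $d\in\{0,1\}$ can be accommodated within the same decomposition (using, if needed, the monotonicity of Lemma \ref{l:k1k2}).
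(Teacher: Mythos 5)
Your proposal is correct and follows exactly the route the paper intends: the paper's proof is the one-line remark that the theorem ``follows easily from Lemma \ref{l:absboundcnk} by continuity of $C(d)$,'' and your triangle-inequality decomposition through $C(1-k_n/n)$, using the uniformity in $k$ of Lemma \ref{l:absboundcnk} for one term and Lemma \ref{l:unifcont} for the other, is precisely the fleshed-out version of that argument. Your added remark on the admissibility of the indices $k_n$ is a sensible point of care that the paper omits.
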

\vspace{0.2cm}
\begin{proof}
It follows easily from Lemma \ref{l:absboundcnk} by continuity of $C(d)$.
\end{proof}

\section{Behavior near $d=1$}
\label{sec:neard1}
In this Section, we finally focus on the behavior of the function $C(d)$ for values of $d$ close to 1.
It is interesting to observe in Figure \ref{fig:BaseBounds} that, from experimental evidence, the $C_n(d)$ functions seem to be convex in a progressively expanding region of $d$ values. On the one hand, it is tempting to conjecture that the limit $C(d)$ is convex in the whole interval $d\in [0,1]$.
On the other hand, near $d=1$, all the $C_n(d)$ curves appear to change concavity and go to zero asymptotically as $(1-d)$. Indeed, we have the following result.
\begin{lemma}
\label{lemma:cnd1}
For every $n$, 
\begin{equation}
\lim_{d\to 1} \frac{C_n(d)}{(1-d)}=1
\end{equation}
\end{lemma}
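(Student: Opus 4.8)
The plan is to evaluate $C_n(d)$ asymptotically as $d\to 1$ by isolating the dominant behavior of the channel $W_n^d$ in this regime. The key observation is that when $d$ is close to $1$, the probability that the channel deletes all $n$ input bits is $d^n$, the probability of passing exactly one bit is $n d^{n-1}(1-d)$, and the probability of passing two or more bits is $O((1-d)^2)$ as $d\to 1$. Thus, to first order in $(1-d)$, the output $Y=W_n^d(X_1^n)$ is either the empty string (with probability $\approx d^n$) or a single surviving bit $X_i$ at a uniformly chosen position $i\in\{1,\dots,n\}$ (with probability $\approx n d^{n-1}(1-d)$). The mutual information is therefore driven entirely by this single-bit event.

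First I would write $I(X_1^n;Y)$ by conditioning on the output length $L=|Y|$, exactly as in the proof of Lemma~\ref{l:cndbound}, using that $I(X_1^n;Y)=I(X_1^n;Y\mid L)$. Splitting according to $L=0$, $L=1$, and $L\geq 2$, the $L=0$ term contributes nothing, the $L\geq 2$ terms are bounded by $n\cdot\Pr(L\geq 2)=O((1-d)^2)$, and the whole expression is dominated by the $L=1$ term, which equals $\Pr(L=1)\,I(X_1^n;Y\mid L=1)$. Conditioned on $L=1$, the channel is precisely the exact-deletion channel $W_{n,1}$, so $\max_{p_{X_1^n}}I(X_1^n;Y\mid L=1)=nC_{n,1}$. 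The point is then that $W_{n,1}$ outputs a single uniformly-chosen input bit, and one checks directly that its capacity satisfies $nC_{n,1}=1$, attained by the uniform i.i.d.\ input: the output bit is then uniform and the mutual information is $1$ bit. Dividing by $(1-d)$ and taking the limit, the $\Pr(L=1)\approx n d^{n-1}(1-d)$ factor cancels the $(1-d)$ in the denominator, while $d^{n-1}\to 1$, leaving $n\cdot\frac{1}{n}=1$.

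The main obstacle is handling the maximization correctly: $C_n(d)$ is a maximum over input distributions, and the optimal distribution depends on $d$, so I cannot simply plug in a single fixed input and pass to the limit. To get the upper bound $\limsup_{d\to 1}C_n(d)/(1-d)\leq 1$, I would bound $I(X_1^n;Y\mid L)$ from above for an arbitrary input, using the crude estimate $I(X_1^n;Y\mid L=1)\leq nC_{n,1}=1$ together with $I(X_1^n;Y\mid L=j)\leq n$ for $j\geq 2$, so that $nC_n(d)\leq \Pr(L=1)\cdot 1+\Pr(L\geq 2)\cdot n$; dividing by $(1-d)$ and letting $d\to 1$ gives $\limsup\leq 1$. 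For the matching lower bound $\liminf\geq 1$, I would choose the single specific input distribution that achieves $C_{n,1}$ (the uniform one) and evaluate $I(X_1^n;Y\mid L)/(1-d)$ along it; since $C_n(d)$ is a maximum it dominates this particular value, and the same asymptotics show the value tends to $1$. Combining the two bounds yields the claimed limit.

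A minor technical care is needed in the exact expansions of $\Pr(L=0)$, $\Pr(L=1)$, and $\Pr(L\geq 2)$ in powers of $(1-d)$, but these are elementary binomial computations: $\Pr(L=j)=\binom{n}{j}(1-d)^j d^{n-j}$, so $\Pr(L=1)=n(1-d)d^{n-1}$ and $\Pr(L\geq 2)=\sum_{j\geq 2}\binom{n}{j}(1-d)^j d^{n-j}=O((1-d)^2)$, confirming that every term except the $L=1$ contribution vanishes faster than $(1-d)$ after division. This reduces the whole statement to the single identity $nC_{n,1}=1$, which is the conceptual heart of the argument.
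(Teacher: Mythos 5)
Your decomposition by the output length $L$ is sound, and the upper-bound half of your argument is correct: since the output under $L=1$ is a single bit, $I(X_1^n;Y\mid L=1)\leq 1$ for every input distribution, and $\Pr(L\geq 2)=O((1-d)^2)$ disposes of the remaining terms, so $\limsup_{d\to 1}C_n(d)/(1-d)\leq 1$ follows. The genuine gap is in the lower bound, and it sits exactly at what you call the conceptual heart of the argument. The identity $nC_{n,1}=1$ is true, but it is \emph{not} attained by the uniform i.i.d.\ input, and the justification you give is incorrect: with i.i.d.\ uniform bits the output of $W_{n,1}$ is indeed marginally uniform, so $H(Y)=1$, but $H(Y\mid X_1^n)$ is not zero, because even given the entire input string the channel still selects the surviving position at random. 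Writing $K$ for the number of ones in $X_1^n$, one has $H(Y\mid X_1^n)=E\left[H(K/n)\right]$ with $K\sim\mathrm{Binomial}(n,1/2)$ and $H$ the binary entropy function, so that $I(X_1^n;W_{n,1}(X_1^n))=1-E\left[H(K/n)\right]<1$ for all $n\geq 2$ (it equals $1/2$ for $n=2$). Since your lower-bound step evaluates the mutual information along precisely this input, it only delivers $\liminf_{d\to 1}C_n(d)/(1-d)\geq 1-E\left[H(K/n)\right]$, which does not meet your upper bound, and the proof as written does not close.

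The repair is small but changes the distribution: the maximizer for $W_{n,1}$ is the equiprobable mixture of the two \emph{constant} strings (all zeros, all ones). For that input the surviving bit is a deterministic function of the input, so $I(X_1^n;Y\mid L=j)=1$ for every $j\geq 1$, and your computation then gives $nC_n(d)\geq \Pr(L\geq 1)=1-d^n$, i.e.\ $C_n(d)\geq (1-d^n)/n$; dividing by $1-d$ and letting $d\to 1$ yields $\liminf_{d\to 1}C_n(d)/(1-d)\geq\lim_{d\to 1}(1+d+\cdots+d^{n-1})/n=1$. It is worth noting that this corrected argument essentially reproduces the paper's own proof: there, the lower bound $C_n(d)\geq(1-d^n)/n$ is obtained by observing that restricting the input to the two constant strings turns $n$ uses of $W_n^d$ into one use of an erasure channel with erasure probability $d^n$, while the upper bound $C_n(d)<1-d$ comes from comparison with the binary erasure channel of erasure probability $d$ --- a cruder but sufficient substitute for your length-conditioning upper bound.
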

\vspace{0.2cm}
\begin{proof}
It is easily shown that for every $n$ and $d$
\begin{equation}
(1-d^n)/n\leq C_n(d) < (1-d).
\label{eq:boundcn}
\end{equation}
The right hand side inequality  follows from the fact that the capacity of $W_n^d$ is obviously smaller than the capacity of a binary erasure channel with erasure probability $d$. To prove the left hand side inequality consider using as input to the channel $W_n^d$ only the sequence composed of $n$ zeros and that composed of $n$ ones. Then the $n$ uses of $W_n^d$ correspond to one use of an erasure channel with erasure probability $d^n$. This proves equation \eqref{eq:boundcn}. Dividing by $(1-d)$ and taking the limit $d\to 1$ gives the required result.
\end{proof}

Lemma \ref{lemma:cnd1} ensures that, for fixed $n$, $C_n(d)$ is not convex in a neighborhood of $d=1$. Note further that 
\begin{equation}
\lim_{d\to 1} \frac{C_n(d)}{(1-d)}=\sup_{d\in (0,1)} \frac{C_n(d)}{(1-d)}=1
\end{equation}
Hence, it is natural to believe that $C_n(d)$ is actually concave in a neighborhood of $d=1$, even if Lemma \ref{lemma:cnd1} is not sufficient to prove this.
However, in the limit $n\to\infty$, it is known (see \cite{drinea_2006,fertonani_2010}) that $C(d)$ satisfies 
\begin{equation}
0.1185\leq \liminf_{d\to 1}\frac{C(d)}{1-d}\leq \limsup_{d\to 1}\frac{C(d)}{1-d}\leq 0.49
\end{equation}
Hence, Lemma \ref{lemma:cnd1} does not hold with $C(d)$ in place of $C_n(d)$ and it is still legitimate to conjecture that $C(d)$ may be convex in $[0,1]$.
The next step is thus to ask if $C_n(d)/(1-d)$ has a limit as $d\to 1$ and, if so, if this limit is reached from above as would be implied by convexity of $C(d)$. The remaining part of this section tries to answer this question.

In order to understand the behavior of $C(d)$ near $d=1$, the following result from \cite{fertonani_2010} is fundamental.
\begin{lemma}[Fertonani and Duman, {\cite[eq. (32)]{fertonani_2010}}]
\label{l:fertonani}
For every $n,k$
\begin{equation}
\limsup_{d\to 1}\frac{C(d)}{1-d}\leq \frac{n C_{n,k}+1}{k+1}.
\label{eq:fertonani}
\end{equation}
\end{lemma}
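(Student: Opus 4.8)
The plan is to prove the inequality as a converse carried out in the \emph{output domain}, i.e.\ by measuring information per surviving bit rather than per transmitted bit, since $C(d)/(1-d)$ is exactly a rate normalized by the expected number $E[L]=N(1-d)$ of survivors. Fix the pair $(n,k)$ and take a long input $X_1^N$ driving $Y=W_N^d(X_1^N)$, with $L=|Y|$. Reading the deletion pattern as a sequence of i.i.d.\ geometric gaps that is independent of the input, I would group the output into consecutive blocks of exactly $k+1$ survivors, the $i$-th block being produced by a contiguous input window $W_i$ of random length $M_i$; here $M_i$ is negative binomial with $E[M_i]=(k+1)/(1-d)$. Because the window boundaries $\{M_i\}$ are functions of the deletion pattern alone, they are independent of $X_1^N$, so revealing them to the receiver as genie side information is free and gives $I(X_1^N;Y)\le I(X_1^N;Y\mid\{M_i\})$.

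Given the boundaries, the receiver can parse $Y$ into the per-window outputs $\tilde Y_i$. Since distinct windows are driven by disjoint input coordinates through independent deletions, the same data-processing/independence argument used in the subadditivity step of Lemma \ref{l:boundCCn} yields $I(X_1^N;Y\mid\{M_i\})\le\sum_i I(X_{W_i};\tilde Y_i\mid M_i)$. Within a window of length $m$ there are $k+1$ survivors: I would regard the first $k$ as the output of an exact deletion channel retaining $k$ of the input bits, and the last survivor as a synchronization marker. The aim is then a per-window bound $I(X_{W_i};\tilde Y_i\mid M_i)\le nC_{n,k}+1$, the term $nC_{n,k}$ accounting for the $k$ ``data'' survivors via the exact channel $W_{n,k}$ and the $+1$ charged to the single marker bit, the monotonicity of the exact-channel mutual information in the number of retained symbols (Lemma \ref{l:k1k2}) being what lets one discard the extra survivor and the anchoring at the window's end.

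To finish, I would apply renewal-reward normalization. The number of complete windows is $R\approx N(1-d)/(k+1)$, each contributing at most $nC_{n,k}+1$, so $I(X_1^N;Y)\lesssim N(1-d)\,(nC_{n,k}+1)/(k+1)$. Dividing by $N$, using $C(d)\le C_N(d)=\tfrac1N\max I(X_1^N;Y)$ from Lemma \ref{l:boundCCn}, gives $C(d)\le\frac{nC_{n,k}+1}{k+1}(1-d)\,(1+o(1))$ as $d\to1$; dividing by $(1-d)$ and taking $\limsup$ then yields \eqref{eq:fertonani}.

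The step I expect to be the real obstacle is the per-window reduction in the second paragraph: the windows have negative-binomial length $M_i$ with mean $(k+1)/(1-d)\to\infty$, so the honest per-window channel is an exact deletion channel on $M_i$ bits, not on the chosen finite $n$, and its last retained bit is anchored at the window end rather than chosen uniformly. Justifying the replacement of this variable, end-anchored object by the clean constant $nC_{n,k}$ — and bookkeeping the boundary entropy precisely, which is what produces the extra $+1$ in the numerator and the shift from $k$ to $k+1$ in the denominator rather than a bare $nC_{n,k}/k$ — is exactly where Lemma \ref{l:k1k2} must be combined with a careful estimate of the marker's contribution, and where whatever slack remains in the constant is incurred.
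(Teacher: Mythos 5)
First, a point of comparison: the paper does not prove this lemma at all --- it is imported verbatim from Fertonani and Duman \cite{fertonani_2010}, and the paper's only original contribution here is the remark that their ``$\lim$'' must be read as ``$\limsup$'' until Theorem \ref{th:limeqinf} establishes that the limit exists. So your attempt can only be judged against the derivation in the cited reference, whose architecture is essentially the one you chose: windows terminating at every $(k+1)$-th surviving bit, free revelation of the window boundaries (they are a function of the deletion pattern alone, hence independent of the input), a per-window information bound with the last survivor delivered ``for free,'' and renewal normalization by the expected number of windows $\approx N(1-d)/(k+1)$. All of those steps of yours are sound as stated.

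The genuine gap is exactly the one you flag yourself: the per-window bound $I(X_{W_i};\tilde Y_i\mid M_i)\le nC_{n,k}+1$. Given the boundaries, the window channel delivers the last input bit of the window noiselessly (that is the honest origin of the ``$+1$'') and passes the remaining $M_i-1$ input bits through an exact deletion channel keeping $k$ of them uniformly, so the honest bound is $(M_i-1)C_{M_i-1,k}+1$ with $M_i$ unbounded as $d\to 1$; Lemma \ref{l:k1k2}, which is monotonicity in the number of \emph{retained} bits for a \emph{fixed} input length, cannot bridge from input length $M_i-1$ to the fixed length $n$. The missing ingredient is the complementary monotonicity: for fixed $k$, the quantity $mC_{m,k}$ is \emph{non-increasing} in $m$. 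This follows from the same cascade/data-processing device used to prove Lemma \ref{l:k1k2}, applied in the other direction: for $m\ge n\ge k$, a uniform $n$-subsequence followed by a uniform $k$-subsequence of it is a uniform $k$-subsequence, i.e.\ $W_{m,k}=W_{n,k}\circ W_{m,n}$, whence
\begin{equation*}
mC_{m,k}=\max_{p}I\bigl(X_1^m;W_{m,k}(X_1^m)\bigr)\le \max_{q}I\bigl(Z_1^n;W_{n,k}(Z_1^n)\bigr)=nC_{n,k}.
\end{equation*}
With this, every window with $M_i-1\ge n$ contributes at most $nC_{n,k}+1$, which is precisely your per-window constant. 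The short windows ($M_i-1<n$), for which this inequality reverses, must then be treated separately: each contributes at most $k+1$ bits (the output length), and a window is short with probability at most $\binom{n}{k+1}(1-d)^{k+1}\to 0$ as $d\to 1$ for fixed $n,k$, so their aggregate contribution is $o\bigl(N(1-d)\bigr)$ and disappears from the $\limsup$. Once these two points are inserted, your renewal-reward accounting goes through and yields \eqref{eq:fertonani}.
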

\vspace{0.3cm}
\begin{remark}
In \cite{fertonani_2010} the authors state that, for every $n$ and $k$, $\lim_{d\to 1} \frac{C(d)}{1-d}\leq \frac{n C_{n,k}+1}{k+1}$. However, we are not aware of a previous formal proof that $\lim_{d\to 1} \frac{C(d)}{1-d}$ exists. This fact is proved in the following theorem.
\end{remark}

\begin{theorem}
\label{th:limeqinf}
It holds that
\begin{equation}
\label{eq:limdiseq}
\lim_{d\to 1}\frac{C(d)}{(1-d)} = \inf_{d\in(0,1)}\frac{C(d)}{1-d}.
\end{equation}
\end{theorem}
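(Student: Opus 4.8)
The plan is to sandwich the limit between the infimum from both sides. Write $f(d)=C(d)/(1-d)$ and set $I=\inf_{d\in(0,1)}f(d)$. The lower bound is immediate: since $I\leq f(d)$ for every $d$, we have $I\leq\liminf_{d\to 1}f(d)$. The entire content of the theorem is therefore the reverse inequality $\limsup_{d\to 1}f(d)\leq I$; once this is in hand, the chain $I\leq\liminf_{d\to 1}f(d)\leq\limsup_{d\to 1}f(d)\leq I$ forces equality throughout, so the limit exists and equals $I$.

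To prove $\limsup_{d\to 1}f(d)\leq I$, I would fix an arbitrary $d_0\in(0,1)$ and show the limsup is at most $f(d_0)$; taking the infimum over $d_0$ then gives the claim. The idea is to feed a well-chosen sequence $(n,k_n)$ into the Fertonani--Duman bound of Lemma \ref{l:fertonani}. Choose $k_n=\lceil(1-d_0)n\rceil$, so that $k_n/n\to 1-d_0$. By Lemma \ref{l:fertonani}, for every $n$,
\[
\limsup_{d\to 1}f(d)\;\leq\;\frac{n\,C_{n,k_n}+1}{k_n+1}.
\]
Since the left-hand side does not depend on $n$, it suffices to compute the limit of the right-hand side as $n\to\infty$ and conclude that the fixed quantity $\limsup_{d\to 1}f(d)$ is bounded by that limit.

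The limit computation is the only real step. Dividing numerator and denominator by $n$ gives
\[
\frac{n\,C_{n,k_n}+1}{k_n+1}=\frac{C_{n,k_n}+1/n}{k_n/n+1/n}.
\]
By Theorem \ref{l:limitnk}, $C_{n,k_n}\to C(d_0)$, while $1/n\to 0$ and $k_n/n\to 1-d_0$; hence the right-hand side converges to $C(d_0)/(1-d_0)=f(d_0)$. Therefore $\limsup_{d\to 1}f(d)\leq f(d_0)$ for every $d_0\in(0,1)$, and taking the infimum over $d_0$ yields $\limsup_{d\to 1}f(d)\leq I$, which finishes the argument.

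I do not anticipate a genuine obstacle, since the two earlier results combine directly: Lemma \ref{l:fertonani} already furnishes an upper bound on the limsup (so no separate proof of the existence of the limit is needed in advance), and Theorem \ref{l:limitnk} supplies the convergence $C_{n,k_n}\to C(d_0)$. If there is any subtlety, it is in recognizing that the additive $+1$ terms become negligible after the division by $n$, so that the ratio tends to $C(d_0)/(1-d_0)$ rather than to some shifted value.
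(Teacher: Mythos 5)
Your proof is correct and follows essentially the same route as the paper: both apply Lemma \ref{l:fertonani} to a sequence $k_n$ with $k_n/n \to 1-d_0$, use Theorem \ref{l:limitnk} to show the right-hand side of \eqref{eq:fertonani} tends to $C(d_0)/(1-d_0)$, and combine the resulting bound $\limsup_{d\to 1} C(d)/(1-d) \leq \inf_{d'} C(d')/(1-d')$ with the trivial reverse inequality for the liminf. The only difference is that you spell out the details (an explicit choice $k_n=\lceil(1-d_0)n\rceil$ and the division by $n$ in the limit computation) that the paper leaves implicit.
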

\vspace{0.2cm}
\begin{proof}
For every $d'\in (0,1)$, let $k_n$ be a sequence such that $k_n/n$ tends to $1-d'$. Then, from Theorem \ref{l:limitnk}, the right hand side of \eqref{eq:fertonani}, with $k_n$ in place of $k$, tends to $C(d')/(1-d')$. Since $d'$ is arbitrary, Lemma \ref{l:fertonani} implies that $\limsup_{d\to 1}C(d)/(1-d) \leq \inf_{d'\in(0,1)}\frac{C(d')}{1-d'}$. However, it is obvious that $\liminf_{d\to 1}C(d)/(1-d) \geq \inf_{d'\in(0,1)}\frac{C(d')}{1-d'}$. Thus $\lim_{d\to 1}C(d)/(1-d)$ exists and is equal to $ \inf_{d'\in(0,1)}\frac{C(d')}{1-d'}$
\end{proof}

A direct consequence of Theorem \ref{th:limeqinf} is the following improved bound on $C(d)$.
\begin{corollary}
\label{cor:newbound}
\begin{equation}
\lim_{d\to 1}\frac{C(d)}{(1-d)}\leq 0.4143.
\end{equation}
\end{corollary}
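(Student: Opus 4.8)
The plan is to read the bound off directly from Theorem~\ref{th:limeqinf}. That theorem identifies the quantity of interest with the infimum $\inf_{d\in(0,1)} C(d)/(1-d)$, and an infimum is bounded above by the value of the function at any single point of its domain. The whole task therefore reduces to exhibiting one deletion probability $d^*\in(0,1)$ for which the ratio $C(d^*)/(1-d^*)$ is small enough. First I would record this reduction explicitly, namely
\begin{equation*}
\lim_{d\to 1}\frac{C(d)}{1-d} = \inf_{d\in(0,1)}\frac{C(d)}{1-d} \leq \frac{C(d^*)}{1-d^*}
\end{equation*}
for a not yet specified $d^*$.

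The obstacle is that $C(d^*)$ is not known in closed form, so the right-hand side cannot be evaluated exactly. I would therefore replace $C(d^*)$ by the best available \emph{upper} bound $U(d^*)\geq C(d^*)$, which in the relevant range of $d$ is supplied by the numerical estimates of Fertonani and Duman in~\cite{fertonani_2010}. This gives
\begin{equation*}
\inf_{d\in(0,1)}\frac{C(d)}{1-d} \leq \frac{U(d^*)}{1-d^*},
\end{equation*}
and it remains only to scan the known bounds $U(d)$ over $d$, choosing the point $d^*$ that minimises $U(d)/(1-d)$. I expect the minimiser to lie in the interior of $(0,1)$, in the regime where the bound of~\cite{fertonani_2010} is sharpest, and the resulting value $U(d^*)/(1-d^*)$ to equal $0.4143$ up to the stated rounding.

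The main difficulty here is computational rather than conceptual: all the analysis has already been carried out in Theorem~\ref{th:limeqinf}, and what is left is to locate, among the available upper bounds for the capacity, the single point that drives the ratio as low as possible and to verify the arithmetic $U(d^*)/(1-d^*)\leq 0.4143$. Because the only freedom is the choice of $d^*$ and the corresponding entry in the bound of~\cite{fertonani_2010}, no additional inequalities or limiting arguments are needed beyond the substitution displayed above.
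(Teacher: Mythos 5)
Your proposal matches the paper's proof exactly: the paper applies Theorem~\ref{th:limeqinf} and bounds the infimum by a single evaluation, taking $d^*=0.65$ with the numerical bound $C(0.65)\leq C_{17}(0.65)=0.145$ from \cite{fertonani_2010} (valid since $C(d)\leq C_n(d)$ by Lemma~\ref{l:boundCCn}), which gives $0.145/0.35\approx 0.4143$. The only thing your sketch leaves open is this concrete choice of $d^*$ and the corresponding table entry, but the reduction and the use of the Fertonani--Duman numerics are precisely the paper's argument.
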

\vspace{0.2cm}
\begin{proof}
As far as the author knows, the best known numerical bound obtained for $\inf_d C(d)/(1-d)$ is $0.4143$ obtained using the bound $C(0.65)\leq C_{17}(0.65)=0.145$, numerically evaluated in  \cite{fertonani_2010}. 
\end{proof}

The usefulness of Theorem \ref{th:limeqinf} is that it allows to deduce provable bounds for $\lim_{d\to 1}\frac{C(d)}{(1-d)}$ from bounds on $C(d)$ even with $d$ much smaller than 1. It is interesting to note, in fact, that different techniques seem to be effective in bounding $C(d)$ in different regions of the interval $[0,1]$. For example, different genie aided channels are used in \cite{fertonani_2010} for smaller values of $d$ than for large values of $d$ and, while equation \eqref{eq:fertonani} is derived in \cite{fertonani_2010} using a bound effective for large $d$, the bound for $C(0.65)$ used in Corollary \ref{cor:newbound} is derived from the numerical value of $C_{17}(d)$ which is not as effective for $d$ larger than $0.8$ (see Table IV in \cite{fertonani_2010}, where bound $C_4$ therein is what we called $C_{17}(d)$, while bound $C_2^*$ is used to deduce \eqref{eq:fertonani}). Thus, in order to obtain improved upper bounds for $\lim_{d\to 1}\frac{C(d)}{(1-d)}$ one effective approach would be to numerically evaluate $C_n(d)$ near $d=0.65$ for $n\geq 18$. This requires, however, high computational and spatial complexity and it is out of the scope of the present paper.

\section{acknowlegdments}
The author would like to thank Dario Fertonani for useful discussions and for providing numerical data obtained during the preparation of \cite{fertonani_2010}.

\end{document}